\pgfplotsset{compat=newest}
\newtheorem{theorem}{Theorem}
\newtheorem{lemma}[theorem]{Lemma}
\newtheorem{remark}[theorem]{Remark}
\newtheorem{assumption}[theorem]{Assumption}
\definecolor{colorSafEDMD}{HTML}{581845}
\definecolor{colorSafEDMDnI}{HTML}{C70039}
\definecolor{colorkEDMD}{HTML}{FFC300}
\newcommand\copyrighttext{%
	\footnotesize 
	\textcopyright 2025 IEEE. Personal use of this material is permitted.
	Permission from IEEE must be obtained for all other uses, in any current or future media, including reprinting/republishing this material for advertising or promotional purposes, creating new collective works, for resale or redistribution to servers or lists, or reuse of any copyrighted component of this work in other works.
}
\newcommand\copyrightnotice{%
	\begin{tikzpicture}[remember picture,overlay]
		\node[anchor=south,yshift=10pt] at (current page.south) {\fbox{\parbox{\dimexpr\textwidth-\fboxsep-\fboxrule\relax}{\copyrighttext}}};
	\end{tikzpicture}%
}
\title{\LARGE \bf
Kernel-based error bounds of bilinear Koopman surrogate models for nonlinear data-driven control 
}
\author{Robin Strässer$^{1}$, Manuel Schaller$^{2}$, Julian Berberich$^{1}$, Karl Worthmann$^{3}$, Frank Allgöwer$^{1}$
\thanks{F.\ Allgöwer is thankful that this work was funded by the Deutsche Forschungsgemeinschaft (DFG, German Research Foundation) under Germany's Excellence Strategy -- EXC 2075 -- 390740016 and within grant AL 316/15-1 -- 468094890.
K.\ Worthmann gratefully acknowledges funding by the German Research Foundation (DFG, project ID 545246093). 
R.\ Strässer thanks the Graduate Academy of the SC SimTech for its support.}
\thanks{$^{1}$R.\ Strässer, J.\ Berberich, and F.\ Allgöwer are with the Institute for Systems Theory and Automatic Control, University of Stuttgart, 70550 Stuttgart, Germany
	(e-mail: {\tt [straesser, berberich, allgower]@ist.uni-stuttgart.de}).}%
\thanks{$^{2}$M.\ Schaller is with the Faculty of Mathematics, Chemnitz University of Technology, 09111 Chemnitz, Germany
	(e-mail: {\tt manuel.schaller@math.tu-chemnitz.de}).}%
\thanks{$^{3}$K.\ Worthmann is with the Optimization-based Control Group, Institute of Mathematics, Technische Universität Ilmenau, 99693 Ilmenau, Germany
	(e-mail: {\tt karl.worthmann@tu-ilmenau.de}).}%
}
\begin{document} 
 
\maketitle
\copyrightnotice
\thispagestyle{empty}
\pagestyle{empty}


\begin{abstract}
    We derive novel deterministic bounds on the approximation error of data-based bilinear surrogate models for unknown nonlinear systems. 
    The surrogate models are constructed using kernel-based extended dynamic mode decomposition to approximate the Koopman operator in a reproducing kernel Hilbert space. 
    Unlike previous methods that require restrictive assumptions on the invariance of the dictionary, our approach leverages kernel-based dictionaries that allow us to control the projection error via pointwise error bounds, overcoming a significant limitation of existing theoretical guarantees. 
    The derived state- and input-dependent error bounds allow for direct integration into Koopman-based robust controller designs with closed-loop guarantees for the unknown nonlinear system. 
    Numerical examples illustrate the effectiveness of the proposed framework.
\end{abstract}

%
%
\section{INTRODUCTION}
Data-driven control has emerged as a prominent tool, especially for complex systems.
While there exist several successful applications of data-driven controllers for \emph{linear} systems with rigorous theoretical guarantees~\cite{waarde2023informativity,berberich2024overview}, extending these guarantees to nonlinear systems poses significant challenges~\cite{martin:schon:allgower:2023b}.
A promising framework for data-driven modeling and control of nonlinear systems is based on the Koopman operator~\cite{koopman:1931}, which is a linear infinite-dimensional operator in a lifted function space~\cite{bevanda:sosnowski:hirche:2021,brunton:budisic:kaiser:kutz:2022}.
When considering nonlinear systems with inputs, the Koopman operator gives rise to a bilinear system~\cite{iacob:toth:schoukens:2024}.
Practical implementations rely on finite-dimensional approximations of the Koopman operator that are typically constructed from data using regression techniques such as extended dynamic mode decomposition (EDMD)~\cite{williams:kevrekidis:rowley:2015}. 
Despite its 
success in various practical applications~\cite{budisic:mohr:mezic:2012,kim:quan:chung:2023}, obtaining rigorous theoretical guarantees for Koopman-based methods remains a central challenge. 
Therein, rigorous bounds on the approximation error play a crucial role in transferring guarantees from the surrogate model to the nonlinear system.
Recent works have established probabilistic error bounds for EDMD-based surrogate models~\cite{mezic:2022,zhang:zuazua:2023,nuske:peitz:philipp:schaller:worthmann:2023}, allowing the formulation of data-driven state-feedback controllers based on bilinear surrogates that come with closed-loop guarantees for the unknown nonlinear system~\cite{strasser:berberich:schaller:worthmann:allgower:2025}.
However, available error bounds, and hence closed-loop guarantees, often rely on restrictive assumptions on the projection error or are formulated using the $L_2$-norm~\cite{schaller:worthmann:philipp:peitz:nuske:2023} such that a \emph{pointwise} bound on the full approximation error of Koopman-based bilinear surrogates for controlled systems is missing.
To address the full approximation error, \cite{williams:rowley:kevrekidis:2016,klus:nuske:hamzi:2020} propose kernel EDMD (kEDMD) to approximate the Koopman operator corresponding to uncontrolled dynamics in an appropriately chosen reproducing kernel Hilbert space (RKHS).
More precisely, the error analysis proposed in~\cite{kohne:philipp:schaller:schiela:worthmann:2025} on the full approximation error of kEDMD is exploited in~\cite{bold:philipp:schaller:worthmann:2024} for approximants of controlled systems in the original state space.
However, since the dynamics are not lifted to a higher-dimensional space, the resulting surrogate is, in general, a nonlinear system, for which a robust controller design is challenging.

In this paper, we derive deterministic bounds on the full approximation error for a novel data-based bilinear surrogate model of an unknown nonlinear system.
The model is constructed using a kEDMD approximation of the Koopman operator based on a nonlinear kernel-based lifting. 
While existing approaches require unrealistic and hard-to-verify assumptions on (approximate) invariance of the dictionary, the proposed method employs a kernel-based dictionary only relying on invariance of the underlying RKHS.
Our bounds are deterministic, not relying on probabilistic sampling estimates, and pointwise, and thus offer qualitative insights into the dependence of the error on system properties. 
Further, we illustrate the merits of the proposed error bounds for data-driven control of nonlinear systems by designing Koopman-based robust controllers with end-to-end guarantees for a nonlinear example system, whose performance is illustrated with numerical simulations.
%
%
\section{PRELIMINARIES}\label{sec:preliminaries}

We first introduce the problem setting (Section~\ref{sec:problem-setting}), before providing the necessary background on the RKHS (Section~\ref{sec:kernel-interpolation}) and the Koopman framework (Section~\ref{sec:Koopman-background}).

\subsection{Problem setting}\label{sec:problem-setting}
We consider the continuous-time nonlinear system 
\begin{equation}\label{eq:dynamics-nonlinear}
    \dot{x} 
    = f_c(x) + \sum\nolimits_{i=1}^{m}g_{ci}(x)u_i
    = f_c(x) + G_c(x) u
\end{equation}
with state $x\in\bbX\subset\bbR^n$, control input $u\in\bbU\subset\bbR^m$, drift dynamics $f_c:\bbR^n\to\bbR^n$, control dynamics $G_c: \bbR^n \to \bbR^{n \times m}$ with $G_c(x) = \begin{bmatrix}
    g_{c1}(x) & \cdots & g_{cm}(x)
\end{bmatrix}$, where $f_c$ and $G_c$ are unknown, but sufficiently smooth. 
$\bbX$ and $\bbU$ denote compact convex sets containing the origin in their interiors.
We assume $f_c(0)=0$ and local Lipschitz continuity of $f_c$, $G_c$ with constants $L_f$, $L_G$, respectively. 
Due to continuity, there exists $\bar{G}>0$ satisfying $\|G_c(x)\| \leq \bar{G}$ on the compact set $\bbX$.
Sampling system~\eqref{eq:dynamics-nonlinear} equidistantly in time for control inputs constant on each sampling interval, i.e., $u(t)\equiv u_k \in \bbU$ for all $t \in [t_k,t_k+\Delta t)$ with sampling period $\Delta t>0$ and $t_k=k \Delta t$, $k\in\bbN_0$, yields the discretized system dynamics
\begin{equation}\label{eq:dynamics-nonlinear-sampled}
    x_{k+1} 
    = x_k + \int_{t_k}^{t_k+\Delta t} f_c(x(t)) + G_c(x(t))u_k\,\dd t
\end{equation}
with $x(t) = x(t;x_0,u)$ and $x_k=x(t_k)$.
In the following, we abbreviate $(x_{k+1},x_k,u_k)$ by $(x^+,x,u)$ for ease of notation. 

The goal of this paper is to derive rigorous error bounds tailored to controller design for a novel data-driven bilinear surrogate of the (sampled) nonlinear system~\eqref{eq:dynamics-nonlinear-sampled}.
Here, we aim for a discrete-time surrogate since continuous-time surrogate models typically require state-derivative data, which is expensive and hard to obtain in practice.
Note that we consider the sampled system of a continuous-time system, as continuous-time representations often arise naturally from physical modeling.
Further, the relation to the underlying continuous-time system is necessary to bound the bilinearization error, which scales with $\Delta t$ (see Theorem~\ref{thm:error-bound}).
We choose a set of $d\in\bbN$ pairwise distinct data points $\cX = \{x_j\}_{j=1}^d \subset \bbX$ with $x_1 = 0$, where we define
$
    h_\cX = \sup_{x\in\bbX} \min_{x_j\in\cX} \|x - x_j \|
$
as the corresponding fill distance.
Note that $h_\cX \leq h_0$ for any $h_0>0$ if the data points in $\cX$ are evenly spaced within $\bbX$ with a distance of $\nicefrac{2h_0}{\sqrt{n}}$.
For each $x_j\in\cX$, we collect $d_j\geq m+1$ data triplets 
$
    \cX_{j} = \{x_j,u_{jl},x_{jl}^+\}_{l=1}^{d_j}
$
with control inputs $u_{jl}\in \bbU$, where $x_{jl}^+ = x(\Delta t;x_j,u_{jl})$, $l \in [1:d_j]$.\footnote{We denote an ordered set of integers $[a,b]\cap \bbZ$ by $[a:b]$.}
\begin{assumption}\label{ass:input-rank-condition}
    The data $\cX$ satisfies $h_\cX < 0.5$ and the control inputs~$\{u_{jl}\}_{l=1}^{d_j}$ satisfy for each $j\in[1:d]$, $d\geq 1$,
    \begin{equation}\label{eq:rank-condition-inputs}
        \rank\left(\bar{U}_j\right) = m + 1,
        \quad
        \bar{U}_j = \begin{bmatrix}
            1 & 1 & \cdots & 1 \\
            u_{j1} & u_{j2} & \cdots & u_{jd_j}
        \end{bmatrix}.
    \end{equation}
\end{assumption}
Assumption~\ref{ass:input-rank-condition} ensures that the applied input sequence is exciting enough to construct the data-driven surrogate model.
\begin{remark}
    Collecting $d_j$ data triplets for each $x_j\in\cX$ may be restrictive in practice. 
    However, we conjecture that the data collection for our proposed approach can be significantly weakened in future work.
    In particular, following the ideas in~\cite{schimperna:worthmann:schaller:bold:magni:2025}, it is possible to first collect data, e.g., in a trajectory-based manner, and then cluster the data such that each data point is in a small radius around an artificially introduced cluster center. 
    Further, Assumption~\ref{ass:input-rank-condition} may be relaxed by using the kernel-based approach in~\cite{bevanda:driessen:iacob:toth:sosnowski:hirche:2024}.
\end{remark}

\subsection{Reproducing kernel Hilbert space}\label{sec:kernel-interpolation}
Next, we introduce the necessary RKHS notation. 
We denote a symmetric and strictly positive definite kernel function by $\mathsf{k}:\bbR^n\times\bbR^n \to \bbR$ and define the \emph{canonical features} $\phi_x$ for $x\in\bbR^n$ by $\phi_x = \mathsf{k}(x,\cdot)$.
Further, for the data $\cX$ and the kernel $\mathsf{k}$ we define the kernel matrix
\begin{equation}\label{eq:kernel-matrix}
    K_\cX = \left[\begin{smallmatrix}
        \mathsf{k}(x_1,x_1) & \cdots & \mathsf{k}(x_d,x_1) \\
        \vdots & \ddots & \vdots \\
        \mathsf{k}(x_1,x_d) & \cdots & \mathsf{k}(x_d,x_d)   
    \end{smallmatrix}\right].
\end{equation}
Importantly, the kernel $\mathsf{k}$ induces a Hilbert space of functions $(\mathbb{H},\langle \cdot,\cdot\rangle_{\mathbb{H}})$ with reproducing property 
$
    f(x) = \langle f,\phi_x\rangle_{\mathbb{H}}
$
for all $x\in\bbR^n$ and $f\in\bbH$~\cite{wendland:2004}. 
Throughout the paper, we consider the RKHS $\bbH=\cN_s$ induced by piecewise-polynomial and compactly supported kernel functions based on the Wendland radial basis functions $\Theta_{n,s}:\bbR^n\to\bbR$ with state dimension $n\in\bbN$ and smoothness degree $s\in\bbN$.
More precisely, $\Theta_{n,s}(x) = \theta_{n,s}(\|x\|)$, where we refer to~\cite[Corollary 9.14]{wendland:2004} for the definition of $\theta_{n,s}$ for general $n$ and $s\in\{0,1,2,3\}$.
Importantly, $\theta_{n,s}\in \cC^{2s}([0,\infty),\bbR)$, i.e., it possesses $2s$ continuous derivatives~\cite[Lemma~9.8]{wendland:2004}.
As an example, for $n\leq 3$ and $s=1$, $\theta_{n,1}\in\cC^2([0,\infty),\bbR)$ is twice continuously differentiable and defined by
\begin{equation}
    \theta_{n,1}(r)=\begin{cases}
        (1-r)^4(4r+1) & \text{for } r<1, \\
        0 & \text{for } r\geq 1.
    \end{cases}
\end{equation}
Then, the induced Wendland kernel is given by 
$
    \mathsf{k}(x,y) = \theta_{n,s}(\|x-y\|)
$ 
for $x,y\in\bbR^n$.
\begin{assumption}\label{ass:smoothness-Wendland-kernel}
    The smoothness degree $s\in\bbN$ of the Wendland kernels satisfies $s\geq 1$.
\end{assumption}
This assumption ensures a sufficiently smooth kernel function, i.e., it is at least twice continuously differentiable.
\begin{lemma}\label{lm:bounded-hessian}
    Let Assumption~\ref{ass:smoothness-Wendland-kernel} hold. 
    Then, the canonical features $\phi_{x_j}$, $j\in[1:d]$, admit a local norm bound on their Hessian matrix on $\bbX$, i.e.,
    \newlength\mylenI
    \settoheight\mylenI{$
        \left\| \nabla^2 \phi_{x_j} (x)\right\| 
        = \left\| 
            \left[\begin{matrix}
                \nicefrac{\partial^2 \phi_{x_j}(x)}{\partial x_1^2}
                & \!\!\!\cdots\!\!\! & 
                \nicefrac{\partial^2 \phi_{x_j}(x)}{\partial x_1\partial x_n} \\
                \vdots & \!\!\!\ddots\!\!\! & \vdots \\                    
                \nicefrac{\partial^2 \phi_{x_j}(x)}{\partial x_n \partial x_1}
                & \!\!\!\cdots\!\!\! & 
                \nicefrac{\partial^2 \phi_{x_j}(x)}{\partial x_n^2}
            \end{matrix}\right]
        \right\| 
        \leq D_\phi
        \hspace*{0.1\linewidth}
    $}
    \begin{equation}\label{eq:assumption-second-derivative}
    \resizebox{\linewidth}{\mylenI}{$
        \hspace*{-0.03\linewidth}
        \left\| \nabla^2 \phi_{x_j} (x)\right\| 
        = \left\| 
            \left[\begin{matrix}
                \nicefrac{\partial^2 \phi_{x_j}(x)}{\partial x_1^2}
                & \!\!\!\cdots\!\!\! & 
                \nicefrac{\partial^2 \phi_{x_j}(x)}{\partial x_1\partial x_n} \\
                \vdots & \!\!\!\ddots\!\!\! & \vdots \\                    
                \nicefrac{\partial^2 \phi_{x_j}(x)}{\partial x_n \partial x_1}
                & \!\!\!\cdots\!\!\! & 
                \nicefrac{\partial^2 \phi_{x_j}(x)}{\partial x_n^2}
            \end{matrix}\right]
        \right\| 
        \leq D_\phi
        \hspace*{0.1\linewidth}
    $}
    \hspace*{-0.1\linewidth}
    \end{equation}
    for all $j\in[1:d]$ and $x\in\bbX$ with some $D_\phi>0$.
\end{lemma}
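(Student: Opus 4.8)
The plan is to reduce the claimed uniform Hessian bound to the continuity and compact support of the Hessian of a single radial profile. First I would exploit the translation invariance of the Wendland kernel: since $\mathsf{k}(x,y)=\Theta_{n,s}(x-y)$, the canonical feature is $\phi_{x_j}(x)=\Theta_{n,s}(x-x_j)$, so that $\nabla^2\phi_{x_j}(x)=(\nabla^2\Theta_{n,s})(x-x_j)$ and the Hessian depends on the index $j$ and the evaluation point $x$ only through the shifted argument $z=x-x_j$. This collapses all $d$ features onto the single function $\Theta_{n,s}$, whose Hessian I then need to bound uniformly.

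Next I would verify that $\Theta_{n,s}\in\cC^2(\bbR^n)$. Under Assumption~\ref{ass:smoothness-Wendland-kernel} the univariate profile satisfies $\theta_{n,s}\in\cC^{2s}\subseteq\cC^2$, and for $z\neq 0$ the map $\Theta_{n,s}$ is the composition of the $\cC^2$ function $\theta_{n,s}$ with the smooth map $z\mapsto\|z\|$, hence twice continuously differentiable with
\begin{equation*}
    \nabla^2\Theta_{n,s}(z)
    = \theta_{n,s}''(\|z\|)\frac{zz^\top}{\|z\|^2}
    + \frac{\theta_{n,s}'(\|z\|)}{\|z\|}\Bigl(I-\frac{zz^\top}{\|z\|^2}\Bigr).
\end{equation*}
The delicate point is $z=0$, where $\|z\|$ fails to be differentiable. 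Here I would use that the Wendland construction forces $\theta_{n,s}'(0)=0$, so that $\theta_{n,s}'(\|z\|)/\|z\|\to\theta_{n,s}''(0)$ as $\|z\|\to 0$ by l'H\^opital's rule; together with $\theta_{n,s}''(\|z\|)\to\theta_{n,s}''(0)$ this yields the direction-independent limit $\nabla^2\Theta_{n,s}(0)=\theta_{n,s}''(0)I$ and hence a continuous Hessian on all of $\bbR^n$.

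Finally I would invoke compactness. Because $\Theta_{n,s}$ is supported in the closed unit ball, its continuous Hessian vanishes outside a compact set and therefore attains a finite maximum; setting $D_\phi=\max_{z\in\bbR^n}\|\nabla^2\Theta_{n,s}(z)\|$ gives the desired bound, and $D_\phi>0$ since $\Theta_{n,s}$ is nonconstant. Equivalently, one may restrict to the arguments $z=x-x_j$, which range over the compact difference set $\bbX-\bbX$ (the continuous image of $\bbX\times\bbX$ under $(a,b)\mapsto a-b$), and apply the extreme value theorem there. As $D_\phi$ depends neither on $j$ nor on $x\in\bbX$, the estimate $\|\nabla^2\phi_{x_j}(x)\|=\|\nabla^2\Theta_{n,s}(x-x_j)\|\leq D_\phi$ holds uniformly, proving the claim. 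The main obstacle is precisely the smoothness at the diagonal $x=x_j$: the univariate $\cC^{2s}$-regularity quoted in the excerpt does not by itself deliver a continuous multivariate Hessian, and one has to exploit the vanishing odd-order derivative $\theta_{n,s}'(0)=0$ to exclude a singularity of the term $\theta_{n,s}'(\|z\|)/\|z\|$ at the origin.
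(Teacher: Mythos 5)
Your proof is correct and follows essentially the same route as the paper's one-line argument: twice continuous differentiability of the canonical features plus compactness (which you obtain even more directly from the compact support of $\Theta_{n,s}$, making $D_\phi$ independent of $\bbX$). The extra care you take at the diagonal $x=x_j$ --- using $\theta_{n,s}'(0)=0$ and the identity $\nabla^2\Theta_{n,s}(z)=\tfrac{\theta_{n,s}'(\|z\|)}{\|z\|}I+\bigl(\theta_{n,s}''(\|z\|)-\tfrac{\theta_{n,s}'(\|z\|)}{\|z\|}\bigr)\tfrac{zz^\top}{\|z\|^2}$ to show that the multivariate Hessian extends continuously to $z=0$ --- addresses a point the paper's proof glosses over by citing only the univariate $\cC^{2s}$ regularity of $\theta_{n,s}$ on $[0,\infty)$; as you rightly note, that alone would not suffice (e.g.\ $r\mapsto r$ is smooth on $[0,\infty)$ but $\|\cdot\|$ is not $\cC^1$ at the origin), and your argument supplies exactly the content of the multivariate $\cC^{2s}(\bbR^n)$ smoothness of Wendland functions that the paper implicitly imports from \cite[Ch.~9]{wendland:2004}.
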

\begin{proof}
    Observe $\phi_{x_j}(x)=\mathsf{k}(x_j,x) = \theta_{n,k}(\|x_j-x\|)\in\cC^{2s}([0,\infty),\bbR)$ and $s\geq 1$ due to Assumption~\ref{ass:smoothness-Wendland-kernel}, i.e., the canonical features are at least twice continuously differentiable.
    Since the region $\bbX$ is compact, we directly conclude the norm bound~\eqref{eq:assumption-second-derivative} for some $D_\phi>0$.
\end{proof}
The native RKHS norm $\|\cdot\|_{\cN_s}$ is induced by the abstract inner product and thus typically hard to compute. For Wendland kernels, this norm is equivalent to a fractional Sobolev norm using an extension operator~\cite[Corollary~10.48]{wendland:2004}.

\subsection{Koopman operator}\label{sec:Koopman-background}
The Koopman operator is a linear but infinite-dimensional operator that views a dynamical system through observables $\psi:\mathbb{X}\to\mathbb{R}$~\cite{koopman:1931}. 
We define the Koopman operator $\cK$ corresponding to the discrete-time system $x^+ = F(x)$ by 
$
    (\cK \psi)(x) = \psi(x^+)
$.
In practice, a finite-dimensional matrix representation of the Koopman operator can be obtained based on data using regression-based tools such as EDMD.
More precisely, we rely on kEDMD for an induced Wendland kernel $\mathsf{k}$.
As shown in~\cite{kohne:philipp:schaller:schiela:worthmann:2025}, the propagation of an observable function $\psi\in\bbH$ can be approximated by
\begin{equation}\label{eq:psi-dynamics-kEDMD}
    \textstyle
    \psi(x^+) = (\cK \psi)(x) \approx \sum\nolimits_{j=1}^d (\hat{K} \psi_\cX)_j \phi_{x_j}(x),
\end{equation}
where $
        \psi_\cX = \begin{bmatrix}
            \psi(x_1) & \!\!\cdots & \!\!\psi(x_d)
        \end{bmatrix}^\top\!\!
$, $
    \hat{K} = K_\cX^{-1} K_{F(\cX)} K_\cX^{-1}
$, 
\begin{equation}\label{eq:kernel-matrix-F}
    K_{F(\cX)} = \left[\begin{smallmatrix}
        \mathsf{k}(x_1,F(x_1)) & \cdots & \mathsf{k}(x_d,F(x_1)) \\
        \vdots & \ddots & \vdots \\
        \mathsf{k}(x_1,F(x_d)) & \cdots & \mathsf{k}(x_d,F(x_d)) 
    \end{smallmatrix}\right],
\end{equation}
with pointwise error bound, cf.~\cite[Theorem~5.2]{kohne:philipp:schaller:schiela:worthmann:2025}.
%
%
\section{BILINEAR KEDMD SURROGATE MODEL WITH DETERMINISTIC ERROR BOUNDS}\label{sec:bilinear-surrogate-with-error-bounds}

We introduce the proposed \emph{bilinear} surrogate model for the nonlinear system~\eqref{eq:dynamics-nonlinear-sampled} in Section~\ref{sec:data-driven-surrogate}, before analyzing the approximation error in Section~\ref{sec:error-bounds}.

\subsection{Data-driven surrogate model estimation}\label{sec:data-driven-surrogate}

To define the proposed surrogate model, we first estimate the nonlinear dynamics at the sampled data points~$\cX$ and then apply kEDMD. 
In particular, we construct surrogates approximating the forward Euler approximations
\begin{subequations}\label{eq:functions-discretized}
    \begin{align}
    \hspace*{-0.02\linewidth}
        f(x) &= x + \Delta t f_c(x),
        \
        \tilde{g}_i(x) = x + \Delta t (f_c(x) + g_{ci}(x)), 
        \\
    \hspace*{-0.02\linewidth}
        G(x) &= \begin{bmatrix}
            \tilde{g}_1(x) - f(x) & \cdots & \tilde{g}_m(x) - f(x)
        \end{bmatrix},
    \end{align}
\end{subequations}
at each data point $x_j \in \mathcal{X}$ based on the collected data $\cX_j$, $j\in[1:d]$.
To this end, we use the linear regression problem
\begin{equation}\label{eq:linear-regression}
    \argmin\limits_{H_j} \left\|
    \begin{bmatrix}
        x_{j1}^+ & \cdots & x_{jd_j}^+
    \end{bmatrix}
    - H_j
    \begin{bmatrix}
        1 & \cdots & 1 \\
        u_{j1} & \cdots & u_{jd_j}
    \end{bmatrix}
    \right\|
\end{equation}
to obtain $
    \hat{H}_j = \begin{bmatrix} 
        \hat{f}(x_j) & \hat{G}(x_j) 
    \end{bmatrix} \in \bbR^{n\times (m+1)}
$.
Here, the linear regression is well-posed due to Assumption~\ref{ass:input-rank-condition}, i.e.,~\eqref{eq:linear-regression} admits a unique solution~$\hat{H}_j\in \bbR^{n\times (m+1)}$, where our later established error analysis accounts for a possible mismatch $\hat{H}_j\approx \begin{bmatrix} 
    f(x_j) & G(x_j) 
\end{bmatrix}$ due to the approximation~\eqref{eq:functions-discretized} of the sampled dynamics~\eqref{eq:dynamics-nonlinear-sampled}.
Since $x_1=0$ is known to be the equilibrium of the nonlinear system, we explicitly encode $\hat{f}(x_1)=f(x_1)=0$, i.e., we set $\hat{H}_1=\begin{bmatrix} 0 & \hat{G}(x_j)\end{bmatrix}$.

Based on the estimated function values $\hat{H}_j$ at the data points in $\cX$, we may now derive the kEDMD-based \emph{bilinear} surrogate model.
To this end, we define the lifted and shifted state $\Psi(x) = \Phi(x) - \Phi(0)$, where
\begin{equation}\label{eq:features-Phi}
    \Phi(x) 
    =
    \begin{bmatrix}
        \phi_{x_1}(x) &
        \cdots &
        \phi_{x_d}(x)
    \end{bmatrix}^\top
\end{equation}
is a vector consisting of all canonical features corresponding to $\cX$.
Note that $\Psi(x)=0$ if and only if $x=0$ on $\bbX$ due to the required fill distance in Assumption~\ref{ass:input-rank-condition} and the support radius of the Wendland basis functions $\Theta_{n,s}$. 
Mimicking \eqref{eq:psi-dynamics-kEDMD}, we approximate the propagated lifted state by the surrogate
\begin{equation}\label{eq:dynamics-bilinear-surrogate-approx}
    \textstyle
    \Psi(x^+) 
    \approx A \Psi(x) 
    + B_0 u
    + \sum\nolimits_{i=1}^{m} u_i B_i \Psi(x),
\end{equation}
where
$
    A = K_{f(\cX)}^\top K_\cX^{-1}
$, $
    B_i = (K_{\tilde{g}_i(\cX)} - K_{f(\cX)})^\top K_\cX^{-1}
$, $
    B_0 = \begin{bmatrix}
        B_1\Phi(0) & \cdots & B_m\Phi(0)
    \end{bmatrix}
$, $K_\cX$ as in~\eqref{eq:kernel-matrix}, and $K_{F(\cX)}$ as in~\eqref{eq:kernel-matrix-F} for $F\in\{f,\tilde{g}_i\}$.
Here, we use the estimated values $\hat{H}_j$ of the unknown functions $f$, $\tilde{g}_i$, $i\in[1:m]$ at the data points $x_j\in\cX$ resulting from~\eqref{eq:linear-regression} for each $j\in[1:d]$.

\subsection{Rigorous error bounds}\label{sec:error-bounds}
Now, we state our main theorem, which bounds the approximation error of the data-driven surrogate~\eqref{eq:dynamics-bilinear-surrogate-approx}.
\begin{theorem}\label{thm:error-bound}
    Suppose Assumptions~\ref{ass:input-rank-condition} and~\ref{ass:smoothness-Wendland-kernel} hold.
    Then, there exist constants $C_1, C_2, h_0>0$ such that for $h_\cX\leq h_0$ any trajectory of the nonlinear system~\eqref{eq:dynamics-nonlinear-sampled} satisfies
    \begin{equation}\label{eq:dynamics-bilinear-surrogate}
        \Psi(x^+) 
        = A \Psi(x) 
        + B_0 u
        + \sum\nolimits_{i=1}^{m} u_i B_i \Psi(x)
        + r(x,u),
    \end{equation}
    where the residual $r(x,u)$ is bounded by
    \begin{equation}\label{eq:deterministic-bound-residual}
        \|r(x,u)\| 
        \leq 
        c_x \|x\| 
        + c_u \|u\|
        + c_{xx} \|x\|^2
        + c_{xu} \|x\|\|u\|
        + c_{uu}\|u\|^2
    \end{equation}
    for all $(x,u)\in\bbX\times\bbU$, where
    \begin{subequations}\label{eq:proportional-bound-residual-constants}
        \begin{align}
            c_x &= \tilde{u} (C_1h_\cX^{s-\nicefrac{1}{2}}\|\Phi\|_{\cN_s} + \sqrt{d} \Delta t^2 C_2 C_3 \|K_\cX^{-1}\|),
            \\
            c_u &= \Delta t^2 \left( \sqrt{md} C_2 C_3 \|K_\cX^{-1}\| + \sqrt{d} \tfrac{D_\phi}{2} \bar{G}^2 \right),
            \\
            c_{xu} &= \sqrt{m} C_1h_\cX^{s-\nicefrac{1}{2}}\|\Phi\|_{\cN_s} + 2 \sqrt{d} \Delta t^2 D_\phi L_f \bar{G},
            \\
            c_{xx} &= \sqrt{d} \Delta t^2 \tfrac{D_\phi}{2} L_f^2 (1 + \tilde{u} + \max_{u\in\bbU} \|u\|_1 ),
            \\ 
            c_{uu} &= \sqrt{d} \Delta t^2 \tfrac{D_\phi}{2} \bar{G}^2,
            \\
            \nonumber
            C_3 &= \tfrac{1}{2} (L_f \bar{x} + \bar{G} \bar{u}) (L_f + L_G \bar{u}) \max_{j\in[1:d]} \left\{\nicefrac{\sqrt{d_j}}{\sigma_{\min}(\bar{U}_j)}\right\}
        \end{align}
        with
        $
            \bar{x} = \max_{x\in\bbX}\{\|x\|\}
        $, $
            \bar{u} = \max_{u\in\bbU}\{\|u\|\}
        $, and $
            \tilde{u} = \max_{u\in\bbU} \big\{\big\|1 - \sum_{i=1}^m u_i\big\|\big\}
        $.
    \end{subequations}
\end{theorem}
\begin{proof}
    See~\ref{sec:app-proof-of-theorem}.
\end{proof}

Theorem~\ref{thm:error-bound} establishes the \emph{deterministic} error bound~\eqref{eq:deterministic-bound-residual} for the kEDMD-based \emph{bilinear} surrogate model~\eqref{eq:dynamics-bilinear-surrogate}.
Here, the constants $C_1$, $C_2$, and $h_0$ only depend on the considered domain $\bbX$ and are derived in~\cite[Section 3.3]{wendland:2004}.
Note that the constants $c_{x}$, $c_u$, $c_{xu}$, $c_{xx}$, $c_{uu}$ approach zero for $\Delta t \to 0$ and $d\to \infty$, where $\Delta t$ needs to converge to zero at a sufficiently fast rate since $\| K_\mathcal{X}^{-1} \|$ grows with $d$.
To establish the error bound~\eqref{eq:deterministic-bound-residual}, we exploit the kernel-based dictionary $\Psi$ based on the invariant native space of Wendland functions $\cN_s$, while existing error bounds in the literature require unrealistic and hard-to-verify assumptions on the (approximate) invariance of the dictionary and rely on probabilistic sampling estimates~\cite{strasser:schaller:worthmann:berberich:allgower:2025,strasser:schaller:worthmann:berberich:allgower:2024b}.
As a result, Theorem~\ref{thm:error-bound} bounds the full approximation error and offers qualitative insights into the dependence of the error on system properties and the collected data.
The bound is quadratic in state and input, and vanishes when approaching the origin.
Further,~\eqref{eq:deterministic-bound-residual} paves the way to rigorous closed-loop guarantees via robust controller design. 
\begin{remark}\label{rk:proportional_bounds}
    According to Theorem~\ref{thm:error-bound}, a robustly stabilizing controller for the uncertain surrogate dynamics~\eqref{eq:dynamics-bilinear-surrogate} is guaranteed to stabilize also the original nonlinear system.
    For the controller design,~\eqref{eq:deterministic-bound-residual} can be over-approximated to get an error bound which is \emph{proportional} in the state and input, i.e.,
    $
        \|r(x,u)\| \leq \tilde{c}_x \|x\| + \tilde{c}_u \|u\|
    $
    for the constants
    $
        \tilde{c}_x 
        = c_x + c_{xx} \bar{x} + c_{xu} \bar{u}
    $ and $
        \tilde{c}_u 
        = c_u + c_{uu} \bar{u}
    $.
    This bound can be directly employed to obtain rigorous closed-loop stability and performance guarantees via existing robust control~\cite{strasser:berberich:allgower:2025,strasser:berberich:schaller:worthmann:allgower:2025} and predictive control~\cite{worthmann:strasser:schaller:berberich:allgower:2024} techniques.
\end{remark}
The dimension of the \emph{bilinear} surrogate model~\eqref{eq:dynamics-bilinear-surrogate} scales with the number of samples in $\cX$ as we collect all canonical features $\phi_{x_j}$ in $\Psi$.
Developing a robust controller design based on a lower-dimensional bilinear surrogate model using, e.g., techniques from model order reduction, is left for future research.
If a bilinear surrogate is not required,~\cite{schimperna:worthmann:schaller:bold:magni:2025} proposes a nonlinear surrogate model for the lower-dimensional lifting $\Psi(x)=x$ to define a predictive controller.
%
%
\section{NUMERICAL EXAMPLE}\label{sec:numerics}
We validate the derived bilinear surrogate model and show its combination with the recently proposed controller designs in~\cite{strasser:berberich:allgower:2025,strasser:berberich:schaller:worthmann:allgower:2025}.
To this end, we consider a zone temperature process used for building control~\cite{huang:2011}, which we have modified to be more nonlinear. 
In particular, we consider
$
    \dot{x} = V_z^{-1} u(T_0 \cos(\frac{1}{5}x) - x^3),
$
with zone temperature $x\in[x_{\min},x_{\max}]\subset\bbR$, air volume flow rate $u\in[u_{\min},u_{\max}]\subset\bbR$, zone volume $V_z$, supply air temperature $T_0$, where all variables denote deviations from the desired setpoints. 
The nonlinear dynamics are unknown, but uniformly gridded data samples $\cX$ in $\bbX$ are available, i.e., $\cX=\{x_{\min},...,-\delta,0,\delta,...,x_{\max}\}$ for $\delta = \nicefrac{(x_{\max}-x_{\min})}{(d-1)}$.
Further, we collect data $\cX_j$ for uniformly drawn inputs $\{u_{j1},u_{j2}\}\subset\bbX$, $j\in[1:d]$ with $V_z=2$, $T_0=-2$, $\bbX=[-1,1]$, $\bbU = [-2,2]$.
Note that the collected data satisfies Assumption~\ref{ass:input-rank-condition} for $d > 3$ data samples.

We construct a kEDMD-based bilinear surrogate model according to Section~\ref{sec:data-driven-surrogate} for the Wendland RKHS with smoothness degree $s=1$. 
To this end, we collect $d$ data points and use the corresponding canonical features to define the lifting function $\Psi$.
Figure~\ref{fig:prediction-error} shows the open-loop prediction error of the surrogate~\eqref{eq:dynamics-bilinear-surrogate-approx}, i.e., dynamics $\Psi_{t+1} = A\Psi_t + B_0 u + \sum_{i=1}^m u_i B_i \Psi_t$ with $\Psi_0=\Psi(x(0))$, and the true value $\Psi(x(t))$ over time, where we highlight the average error for different data lengths $d\in\{5,7,...,19\}$ and the achieved range from worst to best error.
For the open-loop simulation, the inputs are uniformly drawn from~$\bbU$ for initial condition $x(0)=0$.
We compare the surrogate~\eqref{eq:dynamics-bilinear-surrogate-approx} with a SafEDMD-based bilinear surrogate~\cite{strasser:schaller:worthmann:berberich:allgower:2024b}.
For the latter, we employ two different SafEDMD surrogates: 1) based on the kernel-based lifting function $\Psi$ and 2) based on a monomial dictionary of maximal degree three, i.e., $\Psi_\mathrm{mon}(x)=\begin{bmatrix}
    x & x^2 & x^3
\end{bmatrix}^\top$.
The main difference between the two approaches is that SafEDMD cannot rigorously characterize the resulting projection error, while our kEDMD-based surrogate comes with a full approximation error bound.
As shown in Figure~\ref{fig:prediction-error}, the prediction error of the kEDMD is at least as good as the one of SafEDMD using the same lifting function $\Psi$. 
Although SafEDMD with the monomial lifting $\Psi_\mathrm{mon}$ yields a smaller prediction error, it comes without a bound on the projection error and, thus, it has no guaranteed relation to the underlying nonlinear system.

The practical implications of this fact are now demonstrated by using the surrogate models for designing a data-driven controller for the nonlinear system via the sum-of-squares approach from~\cite[Corollary~4]{strasser:berberich:allgower:2025}. 
For the proportional error bound described in Remark~\ref{rk:proportional_bounds}, we consider $\tilde{c}_x=\tilde{c}_u=0.05$.
Figure~\ref{fig:closed-loop-error} shows the average performance for $d=5$ and $100$ uniformly drawn initial conditions within~$\bbX$, where each surrogate leads to a stabilizing controller for the nonlinear system.
Here, the kEDMD-based controller yields the fastest convergence to the origin and achieves the best performance w.r.t.\ the criterion $\sum_{k=0}^{\lfloor t/\Delta t\rfloor} \|x_k\| + \|u_k\|$, where a more detailed performance investigation is left for future research.
We note again that only the proposed surrogate with the pointwise error bounds in Theorem~\ref{thm:error-bound} leads to \emph{guaranteed} closed-loop stability of the unknown nonlinear system.
\bgroup
\begin{figure}[tb]
    \centering
    \captionsetup[subfloat]{captionskip=-1pt}
    \subfloat[Average open-loop prediction error for $d\in\{5,7,...,19\}$. The shaded area indicates the range from worst to best error.]{\label{fig:prediction-error}
    \shortstack{
        \input{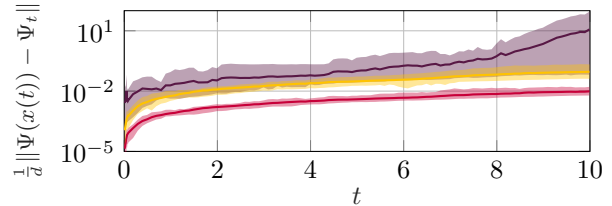}
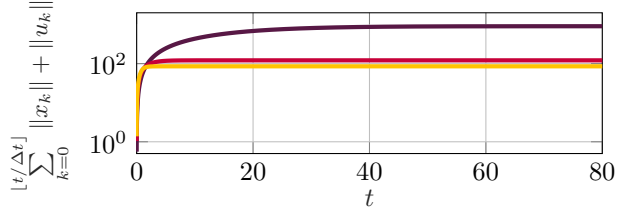
    }}
    \\[-0.1\baselineskip]
    \subfloat[Average closed-loop performance for $100$ randomly drawn $x(0)\in\bbX$.]{\label{fig:closed-loop-error}
%
%
\begin{tikzpicture}
\begin{axis}[%
width=0.9\linewidth,
height=0.4\linewidth,
xmin=0,
xmax=80,
xlabel style={font=\color{white!15!black},yshift=3pt},
xlabel={$t$},
ymode=log,
ymin=0.5,
ymax=2000,
ytick distance=100,
yminorticks=true,
ylabel style={font=\color{white!15!black}},
ylabel={\small\hspace*{-0.05\linewidth}$\sum\limits_{k=0}^{\lfloor t/\Delta t\rfloor} \|x_k\| + \|u_k\|$},
xmajorgrids,
ymajorgrids,
yminorgrids,
]
\addplot [color=colorSafEDMD, ultra thick]
  table[row sep=crcr]{%
0	0.559243410586283\\
0.0100000000000051	1.11819244107265\\
0.019999999999996	1.6768472366787\\
0.0300000000000011	2.23520794249955\\
0.0499999999999972	3.35104766454763\\
0.0699999999999932	4.46571276551263\\
0.0900000000000034	5.57920440174057\\
0.189999999999998	11.1291009499278\\
0.290000000000006	16.6498332451582\\
0.390000000000001	22.1415437555469\\
0.489999999999995	27.6043738754417\\
0.590000000000003	33.0384639811807\\
0.689999999999998	38.4439534840537\\
0.790000000000006	43.8209808804808\\
0.890000000000001	49.1696837996953\\
0.989999999999995	54.4901990495364\\
1.09	59.7826626603823\\
1.19	65.0472099274132\\
1.29000000000001	70.2839754514822\\
1.39	75.4930931788496\\
1.59	85.8289179883221\\
1.79000000000001	96.0557442988946\\
1.98999999999999	106.174624027318\\
2.19	116.186602269762\\
2.39	126.092718465406\\
2.59	135.894007620972\\
2.89	150.401647259919\\
3.19	164.679223393925\\
3.48999999999999	178.730213064104\\
3.79000000000001	192.558103429662\\
4.19	210.654307198178\\
4.59	228.368488849421\\
5.09	249.986847849645\\
5.59	271.038019138571\\
6.19	295.573905291052\\
6.79000000000001	319.344574331112\\
7.48999999999999	346.145274203399\\
8.19	371.979757930568\\
8.98999999999999	400.366780979371\\
9.89	430.90545626038\\
10.89	463.166862628397\\
11.99	496.695009159428\\
13.19	531.008515650899\\
14.49	565.605587231959\\
15.89	599.975118831019\\
17.39	633.61729045666\\
19.09	667.988390696216\\
20.99	702.030025971189\\
23.09	734.789818234213\\
25.39	765.458023207362\\
27.99	794.402189033573\\
30.89	820.578387032294\\
34.19	843.875158652356\\
37.99	863.820950178453\\
42.49	880.074197149701\\
47.99	892.156411910069\\
55.39	900.227449109265\\
67.59	904.680747905164\\
79.99	905.726222736402\\
};
\label{plot:SafEDMD}

\addplot [color=colorSafEDMDnI, ultra thick]
  table[row sep=crcr]{%
0	1.04316064735885\\
0.0100000000000051	2.0763374576507\\
0.019999999999996	3.09959280418332\\
0.0300000000000011	4.11298932731958\\
0.0499999999999972	6.11045827645727\\
0.0699999999999932	8.06925366695364\\
0.0900000000000034	9.9898936247443\\
0.189999999999998	19.0396890266711\\
0.290000000000006	27.2202322329907\\
0.390000000000001	34.6096202696232\\
0.489999999999995	41.2897738032519\\
0.590000000000003	47.3412254385734\\
0.689999999999998	52.8386454144715\\
0.790000000000006	57.8483546882404\\
0.890000000000001	62.4277133411164\\
0.989999999999995	66.6256787238992\\
1.09	70.483872444078\\
1.19	74.037746784556\\
1.29000000000001	77.3176566396011\\
1.48999999999999	83.1568180864896\\
1.69	88.1729043700925\\
1.89	92.4989092083883\\
2.09	96.2405967992153\\
2.39	100.941205708485\\
2.69	104.747046926296\\
2.98999999999999	107.835274058551\\
3.39	111.072520704696\\
3.89	114.047502123727\\
4.48999999999999	116.502108449442\\
5.19	118.343236143231\\
6.19	119.830000456479\\
7.59	120.764860584091\\
10.19	121.250279668395\\
19.79	121.349335821694\\
79.99	121.34947775079\\
};
\label{plot:SafEDMD-monomials}

\addplot [color=colorkEDMD, ultra thick]
  table[row sep=crcr]{
0	1.37752908755645\\
0.0100000000000051	2.73650055813074\\
0.019999999999996	4.07681935806293\\
0.0300000000000011	5.39838708242768\\
0.0499999999999972	7.98484444226839\\
0.0699999999999932	10.4949075753222\\
0.0900000000000034	12.9273663157046\\
0.189999999999998	23.8909751441132\\
0.290000000000006	33.0360243817253\\
0.390000000000001	40.8604554667833\\
0.489999999999995	47.6987783419615\\
0.590000000000003	53.7209153991614\\
0.689999999999998	59.0031011551174\\
0.790000000000006	63.5683950441659\\
0.890000000000001	67.4367783415745\\
0.989999999999995	70.66628924729\\
1.09	73.3459774445468\\
1.19	75.5640748580605\\
1.29000000000001	77.3946202325609\\
1.48999999999999	80.1301742599332\\
1.69	81.9471632656734\\
1.89	83.1387114713561\\
2.19	84.1939176320052\\
2.69	84.9576682769591\\
3.59	85.2930158766735\\
7.48999999999999	85.3487134845365\\
79.99	85.3487254869809\\
};
\label{plot:kEDMD}

\end{axis}
\end{tikzpicture}%
    }
    \caption{Behavior of kEDMD surrogate~\eqref{plot:kEDMD} and SafEDMD surrogates based on $\Psi$~\eqref{plot:SafEDMD} and $\Psi_\mathrm{mon}$~\eqref{plot:SafEDMD-monomials}.}
    \vspace*{-\baselineskip}
\end{figure}
\egroup
%
%
\section{CONCLUSION}\label{sec:conclusion}
We derived novel deterministic error bounds on the full approximation error of data-based bilinear surrogate models for unknown nonlinear systems. 
By leveraging kernel-based dictionaries that are invariant by construction, our approach circumvents restrictive assumptions on the projection error required by existing methods. 
The derived error bounds are state- and input-dependent, offering direct applicability to Koopman-based robust controller design with closed-loop guarantees.

%
%
\bgroup
\renewcommand\thesection{Appendix A}
\section{PROOF OF THEOREM~\ref{thm:error-bound}}\label{sec:app-proof-of-theorem}
\egroup

We divide the proof into two parts. 
First, we derive an approximate control-affine representation of the nonlinear propagation step $\Phi(x^+)$ while characterizing its approximation error.
Second, we employ the kernel-based interpolation~\eqref{eq:psi-dynamics-kEDMD} to deduce the proposed bilinear surrogate model and show that this surrogate admits a bounded residual error.

\noindent\textbf{Part I -- Approximate control-affine representation:~}
First, we characterize the nonlinear propagation step $\Phi(x^+)$ via 
\newlength\mylenA
\settoheight\mylenA{$
    \Phi(x^+) = \Phi(f(x)) 
    + \sum_{i=1}^m \left[
        \Phi(\tilde{g}_i(x)) - \Phi(f(x))
    \right] u_i 
    + r_\Phi(x,u)
    \hspace*{0.12\linewidth}
$}
\begin{equation}\label{eq:surrogate-control-affine}
\resizebox{\linewidth}{\mylenA}{$
    \hspace*{-0.03\linewidth}
    \Phi(x^+) = \Phi(f(x)) 
    + \sum_{i=1}^m \left[
        \Phi(\tilde{g}_i(x)) - \Phi(f(x))
    \right] u_i 
    + r_\Phi(x,u)
    \hspace*{0.12\linewidth}
$}
\hspace*{-0.12\linewidth}
\end{equation}
which is approximately control-affine.
Note that $r_\Phi(x,u)$ depends on $\Delta t$ through the definition of $f$, $\tilde{g}_i$ in~\eqref{eq:dynamics-bilinear-surrogate-approx}.
In the following, we derive an error bound on $r_\Phi$ by investigating the propagation step~\eqref{eq:surrogate-control-affine} element-wise.
To this end, let $j\in[1:d]$ and denote $(r_\Phi(x,u))_j$ by $h_j(\Delta t)$ for simplicity.
Then, the Taylor expansion of $h_j(\Delta t)$ around $\Delta t=0$ yields
\begin{equation*}
    \textstyle
    h_j(\Delta t) 
    = h_j(0) 
    + \Delta t \left.\frac{\partial h_j(\Delta t)}{\partial \Delta t}\right|_{\Delta t=0} 
    + \frac{\Delta t^2}{2} \left.\frac{\partial^2 h_j(\Delta t)}{\partial \Delta t^2}\right|_{\Delta t=\tau} 
\end{equation*}
for some $\tau\in[0,\Delta t]$. 
Further, we define $\tilde{G}_c(x,u) = f_c(x) + G_c(x)u$ and $\tilde{g}_{ci}(x)=f_c(x) + g_{ci}(x)$, $i\in[1:m]$.
Recall~\eqref{eq:functions-discretized} to observe
$
    h_j(0) 
    = \phi_{x_j}(x) - \phi_{x_j}(x) - \sum_{i=1}^m \left[ \phi_{x_j}(x) - \phi_{x_j}(x) \right]u_i
    = 0
$
and
\begin{align}
    &\textstyle
        \frac{\partial h_j(\Delta t)}{\partial \Delta t}\bigg|_{\Delta t=0}
        = \frac{\partial \phi_{x_j}\big(x + \int_{0}^{\Delta t} \tilde{G}_c(x(t),u)\dd t\big)}{\partial x}\bigg|_{\Delta t=0}
        \tilde{G}_c(x,u)
    \nonumber\\
    &\textstyle
        \quad - \frac{\partial \phi_{x_j}(f(x))}{\partial x}\bigg|_{\Delta t=0} f_c(x)
        -\sum_{i=1}^m \frac{\partial \phi_{x_j}(\tilde{g}_i(x))}{\partial x}\bigg|_{\Delta t=0} \tilde{g}_{ci}(x) u_i
    \nonumber\\
    &\textstyle
    \quad +\sum_{i=1}^m \frac{\partial \phi_{x_j}(f(x))}{\partial x}\bigg|_{\Delta t=0} f_c(x) u_i
    = 0.
\end{align}
Further, we derive
\newlength\mylenG
\settoheight\mylenG{$
    \quad +\ \tilde{G}_c(x,u)^\top
    \nabla^2 \phi_{x_j}\Big(x + \int_{0}^{\Delta t} \tilde{G}_c(x(t),u)\dd t\Big)
    \Big|_{\Delta t=\tau} 
    \tilde{G}_c(x,u)
$}
\begin{align}
    &\textstyle
        \frac{\partial^2 h_j(\Delta t)}{\partial \Delta t^2}\bigg|_{\Delta t=\tau} = \sum_{i=1}^m f_c(x)^\top 
        \nabla^2 \phi_{x_j}\left(f(x)\right)\big|_{\Delta t=\tau} 
        f_c(x) u_i
    \nonumber\\
    &\resizebox{\linewidth}{\mylenG}{$
        \quad +\ \tilde{G}_c(x,u)^\top
        \nabla^2 \phi_{x_j}\Big(x + \int_{0}^{\Delta t} \tilde{G}_c(x(t),u)\dd t\Big)
        \Big|_{\Delta t=\tau} 
        \tilde{G}_c(x,u)
    $}
    \nonumber\\
    &\textstyle
    \;\;\, - f_c(x)^\top 
    \nabla^2 \phi_{x_j}\left(f(x)\right)\big|_{\Delta t=\tau} 
    f_c(x)
    \nonumber\\
    &\textstyle
    \;\;\, -\sum_{i=1}^m \tilde{g}_{ci}(x)^\top 
    \nabla^2 \phi_{x_j}\left(\tilde{g}_i(x)\right)\big|_{\Delta t=\tau} 
    \tilde{g}_{ci}(x) u_i.
\end{align}
Then, exploiting~\eqref{eq:assumption-second-derivative} yields 
$
    \left\|\frac{2h_j(\Delta t)}{\Delta t^2}\right\|
$
is less than or equal to
\begin{align}
    &\textstyle
    \max_{x\in\bbX}\left\{\left\|\nabla^2 \phi_{x_j}(x)\right\|\right\} 
    \Big( 
        \|\tilde{G}_c(x,u)\|^2 
    \nonumber\\&\qquad\textstyle
        + \|f_c(x)\|^2 \big\| 1 - \sum_{i=1}^m u_i \big\|
        + \sum_{i=1}^m \left\|\tilde{g}_{ci}(x)\right\|^2\left|u_i\right|
    \Big)
    \nonumber\\
    &\textstyle
    \leq D_\phi  
    \Big( 
        \|f_c(x)\|^2+2\|f_c(x)^\top G_c(x)u\|+\|G_c(x)u\|^2 
    \nonumber\\&\qquad\textstyle
        + \|f_c(x)\|^2 \big\|1 - \sum_{i=1}^m u_i \big\|
        + \sum_{i=1}^m \big(
            \|f_c(x)\|^2|u_i|
    \nonumber\\&\qquad\textstyle
            +2\|f_c(x)^\top g_{ci}(x)\||u_i| 
            +\|g_{ci}(x)\|^2|u_i|
        \big)
    \Big)
    \nonumber\\
    &\textstyle
    \leq D_\phi 
    \bigg( 
        \|f_c(x)\|^2\Big( 
            1 + \Big\|1 - \sum_{i=1}^m u_i\Big\| + \sum_{i=1}^m |u_i|
        \Big)
    \nonumber\\&\qquad\textstyle
        + 2\|f_c(x)\|\Big(\|G_c(x)\|\|u\| + \sum_{i=1}^m \|g_{ci}(x)\||u_i|\Big)
    \nonumber\\&\qquad\textstyle
       +\|G_c(x)\|^2\|u\|^2 + \sum_{i=1}^m \|g_{ci}(x)\|^2|u_i|
    \bigg).
\end{align}
Further, observe $
    \sum_{i=1}^m \|g_{ci}(x)\| |u_i|
    \leq \|G_c(x)\| \|u\|,
$
\begin{equation*}
    \textstyle
     \sum_{i=1}^m \|g_{ci}(x)\|^2 |u_i|
     \leq \sum_{i=1}^m \|g_{ci}(x)\|^2 \|u\| 
     = \|G_c(x)\|^2 \|u\|,
\end{equation*}
and define $\tilde{u} = \max_{u\in\bbU} \big\{\big\|1 - \sum_{i=1}^m u_i\big\|\big\}$.
Then, due to the Lipschitz continuity of $f_c$ with Lipschitz constant $L_f$, $f_c(0)=0$, and using $\|G_c(x)\|\leq \bar{G}$ for all $x\in\bbX$, we obtain
\begin{multline}
    \textstyle
    \left\|\frac{2h_j(\Delta t)}{\Delta t^2}\right\| 
    \leq 
    D_\phi \Big( 
         L_f^2 \big(1 + \tilde{u} + \max_{u\in\bbU}\|u\|_1\big)
        \|x\|^2
    \\\textstyle
        + 4 L_f \bar{G} \|x\| \|u\|
        + \bar{G}^2 (\|u\|^2 + \|u\|)
    \Big).
\end{multline}
Finally, we combine this bound for each $j\in[1:d]$ to obtain
\begin{align*}
    \textstyle
    \|r_\Phi(x,u)\| 
    &\textstyle
    = \left\|
        \begin{bmatrix}
            h_1(\Delta t) & \cdots & h_d(\Delta t)  
        \end{bmatrix} 
    \right\|
    \nonumber\\
    &\textstyle
    \leq \sqrt{d}\Delta t^2 \big[
        \tfrac{D_\phi}{2} L_f^2 \big(1 + \tilde{u} + \max_{u\in\bbU} \|u\|_1 \big) \|x\|^2
    \nonumber\\&\textstyle
        + 2 D_\phi L_f \bar{G} \|x\| \|u\|
        + \tfrac{D_\phi}{2} \bar{G}^2 (\|u\| + \|u\|^2)
    \big].
\end{align*}

\noindent\textbf{Part II -- Kernel-based interpolation:~}
Next, we use the collected data to represent the individual terms in~\eqref{eq:surrogate-control-affine}.

\emph{a) Term $\Phi(f(x))$:~}
We define the action of the Koopman operator $\cK_0$ corresponding to the (autonomous) dynamics $x^+=f(x)$ as $(\cK_0 \Phi)(x) \coloneqq \Phi(f(x))$ with $\Phi$ defined in~\eqref{eq:features-Phi}.
Since $\cK_0$ is an infinite-dimensional operator, we approximate its action via kEDMD. 
More precisely, the Koopman action $\cK_0\phi_{x_j}$ of any element $\phi_{x_j}\in\cN_s$ in $\Phi$ is approximated by
\begin{equation}
    \textstyle
    \phi_{x_j}(f(x))
    = (\cK_0 \phi_{x_j})(x)
    \approx
    \sum_{j=1}^{d} (\hat{K}_0\phi_{x_j,\cX})_j \phi_{x_j}(x)
\end{equation}
using the kEDMD approximation~\eqref{eq:psi-dynamics-kEDMD}, where we substitute $F$ and $\psi$ by $f$ and $\phi_{x_j}$, respectively.
By stacking all elements $\phi_{x_j}$, we obtain the vectorized representation $\Phi(f(x)) \approx (\hat{K}_0 \Phi_\cX)^\top \Phi(x)$ with $
    \Phi_\cX = \begin{bmatrix}
        \phi_{x_1,\cX} & \cdots & \phi_{x_d,\cX}
    \end{bmatrix}
    = K_\cX
$.
Further, exploiting $\Phi_\cX^\top = K_\cX$ due to the symmetry of $K_\cX$ and the definition of $\hat{K}_0$ according to~\eqref{eq:psi-dynamics-kEDMD} yields
\begin{equation*}
    (\hat{K}_0\Phi_{\cX})^\top 
    = K_\cX K_\cX^{-1} K_{f(\cX)}^\top K_\cX^{-1} 
    = K_{f(\cX)}^\top K_\cX^{-1}
    = A
\end{equation*}
in line with the definition of $A$ in~\eqref{eq:dynamics-bilinear-surrogate-approx}.
Here, we consider $A$ as a perturbed matrix representation of $P_\cX \cK_0$ in the canonical basis of the dictionary given by $\{\phi_{x_j}\,|\,j\in [1:d]\}$, where $P_\cX$ is the orthogonal projection onto $\operatorname{span}(\phi_{x_1},\ldots,\phi_{x_d})$.
The perturbation is in $\cO(\Delta t^2)$ and occurs since the collected data samples $\{x_{jl},u_{jl},x_{jl}^+\}_{l=1}^{d_j}$ for each $j\in[1:d]$ do not follow the approximation~\eqref{eq:functions-discretized} but the true sampled dynamics~\eqref{eq:dynamics-nonlinear-sampled}.
Next, we derive a rigorous error bound to
\begin{multline}\label{eq:proof-perturbed-residual-A}
    \Phi(f(x)) - A\Phi(x) =
    (\cK_0\Phi - A\Phi)(x) 
    \\
    = (\cK_0\Phi - P_\cX \cK_0 \Phi)(x) + (P_\cX \cK_0 \Phi - A\Phi)(x)
\end{multline}
by providing a bound on the individual entries.
To this end, let $j\in [1:d]$. 
Then,~\cite[Theorem 3.7]{bold:philipp:schaller:worthmann:2024} yields\footnote{We use~\cite[Theorem 3.7]{bold:philipp:schaller:worthmann:2024} with relaxed assumptions on $s$ by exploiting~\cite[Theorem 11.17]{wendland:2004} instead of~\cite[Lemma 2.5]{bold:philipp:schaller:worthmann:2024} in its proof.}
\begin{equation*}
    |(\cK_0 \phi_{x_j} - P_\cX\cK_0 \phi_{x_j})(x)| 
     \leq C_1h_\cX^{s-\nicefrac{1}{2}} \operatorname{dist}(x,\cX) \|\phi_{x_j}\|_{\cN_s}
\end{equation*}
for fill distance $h_\cX \leq h_0$ and smoothness degree $s$ of the Wendland kernels, where we refer to~\cite[Section 3.3]{wendland:2004} for the derivation of the constants $C_1$, $h_0$ only depending on the domain $\bbX$.
Further, $0\in\cX$ yields $\operatorname{dist}(x,\cX) \leq \|x\|$.
For the second term in~\eqref{eq:proof-perturbed-residual-A}, we observe $x_{jl}^+ = (f(x_j) + \xi_f(x_j,u_{jl})) + (G(x_j) + \xi_G(x_j,u_{jl})) u_{jl}$, where $
    \|\xi_f(x_j,u_{jl})\|
    \leq \Delta t^2L_f \hat{C}_3
$ and $
    \|\xi_G(x_j,u_{jl})\| 
    \leq \Delta t^2L_G \hat{C}_3
$
for $\hat{C}_3 = \frac{1}{2} (L_f \bar{x} + \bar{G} \bar{u})$.
Hence, we obtain the approximately control-affine system
\begin{equation}\label{eq:dynamics-nonlinear-sampled-control-affine}
    x_{jl}^+ 
    = f(x_j) + G(x_j)u_{jl} + \xi(x_j,u_{jl}),
\end{equation}
where $\xi(x_j,u_{jl}) = \xi_f(x_j,u_{jl}) + \xi_G(x_j,u_{jl}) u_{jl}$ is bounded by $ \|\xi(x_j,u_{jl})\| \leq \Delta t^2\tilde{C}_3$ with 
$
    \tilde{C}_3
    = \hat{C}_3 (L_f + L_G \bar{u})
$.
Now, we exploit~\cite{ziemann:tsiamis:lee:jedra:matni:pappas:2023} to bound the deviation of $\hat{H}_j$ from the true values
$
    H_j = \begin{bmatrix}
        f(x_j) & G(x_j)
    \end{bmatrix}
$ 
as
\begin{equation*}
    \textstyle
    \| \hat{H}_j - H_j \| 
    \leq \frac{\sigma_{\max}\left(
        \begin{bmatrix}
            \xi(x_j,u_{j1}) & \!\!\cdots\!\!\! & \xi(x_j,u_{jd_j})
        \end{bmatrix}\right)
    }{\sigma_{\min}(\bar{U}_j)
    }
    \leq \Delta t^2 C_3
\end{equation*}
with 
$
    C_3 = \tilde{C}_3 \max_{j\in[1:d]} \big\{ \nicefrac{\sqrt{d_j}}{\sigma_{\min}(\bar{U}_j)}\big\}
$ 
and $\bar{U}_j$ as in~\eqref{eq:rank-condition-inputs}, where $\sigma_{\min}(\bar{U}_j)$ is positive due to Assumption~\ref{ass:input-rank-condition}.
Then, exploiting $f(0)=0$ with similar techniques as in~\cite[Theorem 4.3]{bold:philipp:schaller:worthmann:2024} yield 
\begin{equation}
    \textstyle
    \hspace*{-0.02\linewidth}
    \|(P_{\mathcal{X}} \mathcal{K}_0 \phi_{x_j})(x) - (A \Phi(x))_j\| \leq \Delta t^2 C_2 C_3 \|K_\cX^{-1}\| \|x\|,
\end{equation}
see again~\cite[Section 3.3]{wendland:2004} for the constant $C_2>0$. 
Thus, by leveraging~\eqref{eq:proof-perturbed-residual-A}, we establish the kernel-based representation
\begin{equation}\label{eq:kernel-interpolation-A}
    \textstyle
    \Phi(f(x)) = A \Phi(x) + r_A(x),
\end{equation}
where with $
    \|\Phi\|_{\mathcal{N}_s} := \| (\|\phi_{x_1}\|_{\cN_s}, \ldots, \|\phi_{x_d}\|_{\cN_s})\|
$ holds
\begin{align}
    \|r_A(x)\|
    &= \|\Phi(f(x)) - A \Phi(x)\|
    = \|(\cK_0\Phi)(x) - A \Phi(x)\|
    \nonumber\\
    &\leq (C_1h_\cX^{s-\nicefrac{1}{2}}\|\Phi\|_{\cN_s} 
    + \sqrt{d} \Delta t^2 C_2 C_3 \|K_\cX^{-1}\| ) \|x\|.
\end{align}

\emph{b) Term $\Phi(\tilde{g}_i(x))$:~}
Analogous to the consideration of $\Phi(f(x))$, we define the action of the Koopman operator corresponding to the (autonomous) dynamics $x^+ = \tilde{g}_i(x)$ as $(\cK_i \Phi)(x) \coloneqq \Phi(\tilde{g}_i(x))$, $i\in[1:m]$. 
Hence, by replacing $f$ by $\tilde{g}_i$ in the steps above, we obtain
\begin{equation}\label{eq:kernel-interpolation-B_i}
    \textstyle
    \Phi(\tilde{g}_i(x)) = \hat{B}_i \Phi(x) + r_{\hat{B}_i}(x)
\end{equation}
with $\hat{B}_i = K_{\tilde{g}_i(\cX)}^\top K_\cX^{-1}$, where a bound on $r_{\hat{B}_i}$ follows analogously to the estimate of $r_A$.
More precisely,
$
    \|r_{\hat{B}_i}(x)\|
    \leq C_1h_\cX^{s-\nicefrac{1}{2}}\|\Phi\|_{\cN_s} \|x\| + \sqrt{d} \Delta t^2 C_2 C_3 \|K_\cX^{-1}\|
$, where the latter state-independent term remains to address the case $G(0)\neq 0$.

\emph{c) Propagation step $\Phi(x^+)$:~}
Finally, we combine the derived representations~\eqref{eq:kernel-interpolation-A} and~\eqref{eq:kernel-interpolation-B_i} of $\Phi(f(x))$ and $\Phi(\tilde{g}_i(x))$, $i\in[1:m]$, respectively, according to~\eqref{eq:surrogate-control-affine}, i.e.,
\begin{multline}\label{eq:propagation-step-Phi-individual-residuals}
    \textstyle
    \Phi(x^+)
    = A \Phi(x) + r_A(x)
    + \sum_{i=1}^{m} \left[\hat{B}_i \Phi(x) - A \Phi(x)\right] u_i
    \\\textstyle
    + \sum_{i=1}^{m} \left[r_{\hat{B}_i}(x) - r_A(x)\right] u_i + r_\Phi(x,u).
\end{multline}
In the following, we define the overall residual
\begin{equation}\label{eq:proof-definition-residual}
    \textstyle
    \hspace*{-0.015\linewidth}
    r(x,u) = r_A(x) + \sum_{i=1}^{m} \left[r_{\hat{B}_i}(x) - r_A(x)\right] u_i + r_\Phi(x,u) 
\end{equation}
and recall $B_i = (K_{\tilde{g}_i(\cX)}-K_{f(\cX)})^\top K_\cX^{-1} = \hat{B_i} - A$ in~\eqref{eq:dynamics-bilinear-surrogate-approx}, such that~\eqref{eq:propagation-step-Phi-individual-residuals} reads
\begin{multline*}
    \textstyle
    \Psi(x^+) + \Phi(0)
    = A \left(\Psi(x) + \Phi(0)\right) 
    \\\textstyle
    + \sum_{i=1}^{m} B_i \left(\Psi(x) + \Phi(0)\right) u_i
    + r(x,u),
\end{multline*}
where we substitute $\Phi(x) = \Psi(x) + \Phi(0)$.
Thus, exploiting $r_A(0) = 0$, i.e., $A\Phi(0) = (\cK_0 \Phi)(0) = \Phi(f(0)) = \Phi(0)$, establishes the bilinear surrogate dynamics~\eqref{eq:dynamics-bilinear-surrogate} for $B_0 = \begin{bmatrix}
    B_1\Phi(0) & \cdots & B_m\Phi(0)
\end{bmatrix}$.
Finally, leveraging the definition of $r$ in~\eqref{eq:proof-definition-residual} with the error bounds on $r_\Phi$, $r_A$, $r_{\hat{B}_i}$ yields
\begin{align*}
    \textstyle
    \|r(x,u)\|
    &\textstyle
    \leq \Big\|\Big(1-\sum_{i=1}^m u_i\Big)r_A(x)\Big\|
    \\&\quad\textstyle
    + \left\|\begin{bmatrix}
        r_{\hat{B}_1}(x) & \!\!\!\cdots\!\!\! & r_{\hat{B}_m}(x)
    \end{bmatrix}\right\| \|u\| 
    + \|r_\Phi(x,u)\|
    \\
    &\textstyle
    \leq \tilde{u} \|r_A(x)\|
    + \sqrt{m} \|r_{\hat{B}_i}(x)\| \|u\| 
    + \|r_\Phi(x,u)\|
\end{align*}
and, hence, the error bound~\eqref{eq:deterministic-bound-residual} with the constants~\eqref{eq:proportional-bound-residual-constants}. 
\qed
%
\bibliographystyle{IEEEtran}
\bibliography{literature}

\end{document}